\documentclass[letterpaper, 11pt]{article}
\usepackage[margin=1in]{geometry}
\usepackage{fancyhdr}

\usepackage[authoryear, round]{natbib}
\usepackage{times}
\usepackage{multirow}
\usepackage{booktabs}
\usepackage{mathtools}
\usepackage{thmtools}
\usepackage{bm}
\usepackage{microtype}
\usepackage[colorlinks=true,linkcolor=blue!50!black,citecolor=blue!50!black,
            urlcolor=blue!50!black]{hyperref}
\usepackage{cleveref}
\usepackage{amsmath,amssymb, amsthm}
\usepackage{tikz}
\usepackage{authblk}
\usetikzlibrary{arrows.meta,positioning}

\theoremstyle{plain}
\newtheorem{theorem}{Theorem}[section]

\newtheorem{lemma}[theorem]{Lemma}

\theoremstyle{definition}
\newtheorem{definition}[theorem]{Definition}

\theoremstyle{remark}

\newcommand{\FF}{\mathbb{F}}
\newcommand{\R}{\mathbb{R}}

\newcommand{\EE}{\mathbb{E}}
\newcommand{\E}{\mathbb{E}}
\newcommand{\PR}{\Pr}
\newcommand{\Unif}{\mathrm{Unif}}
\newcommand{\Topk}{\mathrm{Top}_k}

\newcommand{\STAT}{\mathrm{STAT}}
\newcommand{\Var}{\mathrm{Var}}

\newcommand{\rec}{\mathrm{recall}}

\newenvironment{CompactItemize}{
\begin{list}{\tiny$\bullet$}{%
\setlength{\leftmargin}{10pt}
\setlength{\itemindent}{0pt}
\setlength{\topsep}{-1pt}
\setlength{\itemsep}{0pt}
}}
{\end{list}}

\newcommand{\xorpanel}[9]{%
  \begin{tikzpicture}[
    every node/.style={font=\small},
    kw/.style={draw, rounded corners=1pt, minimum width=20mm,
               minimum height=6mm, align=center, font=\small},
    gate/.style={draw, circle, inner sep=0.6pt, font=\small},
    bit/.style={draw, minimum size=5.5mm, font=\small},
    hi/.style={draw, line width=1.1pt, minimum size=5.5mm, font=\small},
    doc/.style={draw, line width=1.1pt, minimum width=13mm,
                minimum height=7.5mm, font=\small},
    arr/.style={-{Latex[length=1.8mm]}, gray, semithick},
    emit/.style={-{Latex[length=2.2mm]}, semithick},
  ]
    \node[font=\bfseries] at (2.05,2.95) {#1};

    \node[kw, fill=#2] (w) at (0,2)  {waterproof};
    \node[kw, fill=#3] (i) at (0,1)  {insulated};
    \node[kw, fill=#4] (l) at (0,0)  {lightweight};

    \node[gate] (g1) at (1.95,2) {$\oplus$};
    \node[gate] (g2) at (1.95,1) {$\oplus$};
    \node[gate] (g3) at (1.95,0) {$\oplus$};

    \draw[arr] (w.east) -- (g1.180);
    \draw[arr] (i.east) -- (g1.180);
    \draw[arr] (w.east) -- (g2.180);
    \draw[arr] (l.east) -- (g2.180);
    \draw[arr] (i.east) -- (g3.180);
    \draw[arr] (l.east) -- (g3.180);

    \node[bit]  (b1) at (3.05,2) {#5};
    \node[bit] (b2) at (3.05,1) {#6};
    \node[bit] (b3) at (3.05,0) {#7};

    \foreach \g/\b/\lab in {g1/b1/1, g2/b2/2, g3/b3/3}{%
      \draw[arr] (\g) -- node[above, font=\scriptsize, inner sep=1pt]
                 {$\operatorname{bit}_{\lab}$} (\b);}

    \coordinate (t) at (3.65,2);
    \coordinate (m) at (3.65,1);
    \coordinate (u) at (3.65,0);
    \draw[gray, semithick] (b1.east) -- (t) (b3.east) -- (u);
    \draw[gray, semithick] (t) -- (u);
    \node[doc] (d) at (4.85,1) {#9};
    \draw[emit] (m) -- (d);

    \node[font=\small] at (3.05,-0.8) {code #8};
  \end{tikzpicture}%
}

\title{Learning Query Encoders Can Be Hard Even When Vector Retrieval Is Geometrically Easy}

\author[1]{Anders Wikum \thanks{Work done while interning at Amazon AWS}}
\author[2]{Nina Mishra}
\author[1,2]{Amin Saberi}
\author[2, 3]{Tal Wagner}
\affil[1]{Stanford University}
\affil[2]{Amazon AWS}
\affil[3] {Tel Aviv University}
\date{}
\begin{document}
\maketitle

\begin{abstract}
Efficient vector retrieval requires both a corpus geometry that supports retrieving the right documents through vector similarity, and a query encoder that can embed queries near their desired documents in the embedding space. Recent work has studied geometric capacity through the lens of the minimum embedding dimension needed to realize all top-$k$ answer sets of $n$ documents. We study a different notion of geometric capacity—the maximum recall achievable for a frozen document index—and explore whether learned query encoders can reach this ceiling. On several real-world retrieval benchmarks, we show that retrieval quality of single-vector query encoders often lies far below what the document indices can support. 

Motivated by this observation, we give theoretical evidence that learning query encoders can be computationally hard. 
In particular, we construct a retrieval task that (1) admits a query encoder with perfect recall which is representable by a small one-hidden-layer ReLU network, but (2) any statistical-query learner (a class capturing learners that access training data through aggregate statistics) provably requires exponentially many statistical queries to achieve non-trivial recall
advantage over the random baseline $k/n$. Taken together, our results suggest substantial unrealized geometric capacity in retrieval benchmarks and establish query encoder learnability as a possible barrier in embedding-based retrieval.
\end{abstract}

\section{Introduction}

Efficient, large-scale modern vector retrieval systems are based on the bi-encoder architecture \citep{karpukhin20:dense,lewis20:retrieval} in which documents are embedded offline as vectors and stored in a \emph{frozen index} that can be reused across a stream of queries. At retrieval time, a \emph{query encoder} maps the incoming query into the shared vector space, and the system returns the $k$ best documents by similarity score (e.g. inner-product, cosine, or inverse $\ell_2$ distance). The success of this architecture depends on two distinct requirements:

\begin{CompactItemize}
\item the \textbf{embedded document geometry} must support accurate retrieval through vector similarity---that is, there must exist a query embedding whose $k$-nearest documents match the semantically intended documents for that query; and 
\item it must be \textbf{computationally feasible} to learn a query encoder that produces these query embeddings.
\end{CompactItemize}

The first has been the primary focus of recent work. Theoretically, it has been shown that even if documents are embedded optimally, geometry can be the limiting factor when embedding dimension is low and the retrieval task requires realizing many distinct top-$k$ subsets of the document corpus; at the same time, empirical studies have observed that these geometric limits do not necessarily take effect in practice \citep{weller26:theoretical,s26:strengths,wang26:theoretically,okajima26:limits,bangachev26:dimensionality}. These findings motivate the questions: to what extent does a document index support high recall, and subject to this ceiling, how much recall can a learned query encoder achieve? We investigate both questions empirically and theoretically. 

Our first contribution is an empirical study of the gap between the \emph{geometric capacity} of a document index (i.e. the best recall it can possibly support) and the recall  achieved by learned query encoders. We develop a ranking support vector machine (ranking SVM)-based method for efficiently gauging the geometric capacity of a given embedded corpus, then use it to study the recall attained on 20 datasets from the BRIGHT, BEIR and LIMIT retrieval benchmarks by various pre-trained and fine-tuned query encoders. 
Our results show that geometric capacity is rarely the limit---on most datasets, even a lightweight document encoder can support near-perfect retrieval. On the other hand, the tested query encoders---even when fine-tuned on queries and their gold sets---mostly fail to exceed 50\% or even 25\% recall on held-out test queries. These findings point to query-side encoding as a more salient barrier than geometric capacity in dense retrieval. These results are given in \Cref{sec:geom}.

Our second contribution is a theoretical result which shows the existence of a retrieval task for which learning a good query encoder is computationally hard. 
Namely, we construct a task for which a query encoder with perfect recall exists and can be represented as a small ReLU network, but learning a query encoder with better-than-random recall requires exponentially many steps in the statistical query (SQ) learning model \citep{kearns93:efficient}---a computational model for supervised learning in which a learner may only access the training set via aggregate statistics. This establishes a rigorous setting where retrieval quality is constrained not by the document geometry, but by the inability to efficiently learn a good query encoder.
This result is formalized and proved in \Cref{sec:barriers}. 

Our results complement the recent line of work on geometric limits of embedding-based retrieval, shining a spotlight on query encoding as a barrier and possible source of hardness.

\subsection{Related work}
\noindent\textbf{Geometric limits of vector retrieval.} A recent line of research has explored whether embedding dimension constrains the performance of single-vector retrieval methods by considering the question: \emph{what is the minimum dimension required to realize all top-$k$ subsets of $n$ documents?} \citet{weller26:theoretical} showed dimension lower bounds for realizability with and without margin, and introduced the LIMIT data to showcase these limitations empirically. Subsequent work has extended the no-margin lower bounds beyond inner product scoring \citep{wang26:theoretically}, characterized the minimum embedding dimension for max-margin separation \citep{bangachev26:dimensionality}, and examined further limits imposed by quantization \citep{okajima26:limits}. Each of these results assumes the ability to embed documents optimally. Empirically, independent observations have shown that geometry is not, in fact, the primary source of difficulty for the LIMIT dataset \citep{s26:strengths, wang26:theoretically}, instead attributing difficulty to mechanisms like drowning in documents or general learning failures. We show that this phenomenon is not unique to LIMIT---on realistic retrieval benchmarks, dense retrievers often achieve performance far below what is supported by the document geometry.
\\

\noindent\textbf{SQ lower bounds.} The statistical query (SQ) model \citep{kearns93:efficient} restricts a learner's sample access to aggregate statistics of the training distribution; a setting that captures gradient-based learning under suitable precision and batch-size conditions \citep{abbe21:power}. Following the foundational result that parity functions are not efficiently SQ-learnable \citep{kearns93:efficient}, subsequent work has characterized weak SQ-learnability for function classes \citep{blum94:weakly} and developed SQ lower bounds for search problems in which an algorithm must produce an output satisfying a success criterion \citep{feldman13:statistical}. These techniques have yielded lower bounds for learning shallow neural networks \citep{song17:complexity,goel20:superpolynomial,chen22:hardness}, robust estimation of gaussians \citep{diakonikolas17:statistical}, and robust classification \citep{bubeck19:adversarial}. In this work, we use these techniques to show that learning a single-vector retrieval model with nontrivial recall can be SQ-hard.

\section{Preliminaries}
\noindent\textbf{Information retrieval.}
A retrieval task consists of a corpus of $n$ documents $d_1, \dots, d_n$ and a
distribution $\mathcal{Q}$ over natural language queries. Each query $q$ has an
associated set of retrieval targets $G(q) \subset \{d_1, \dots, d_n\}$. We refer
to $G(q)$ as the \emph{gold documents} for query $q$. The goal is to retrieve a set of documents $S$ with high recall, where recall is defined as
\[\rec(S, G(q)) = \frac{|S \cap G(q)|}{|G(q)|}.\]
A retrieval rule $R: q \mapsto S(q)$ specifies the retrieved set for every query $q$. For a distribution over queries $\mathcal{Q}$, we are interested in learning a retrieval rule that maximizes the expected recall of the returned documents
\[\E_{q \sim \mathcal{Q}}[\rec(R(q), G(q))].\]

\noindent\textbf{Single-vector retrieval.} The dominant paradigm for single-vector retrieval is a bi-encoder architecture, in which document embeddings $x_{d_i} \in \R^d$ are computed in advance and stored in a \emph{frozen index} $\mathcal{X}$, while a separate query encoder maps queries into the shared vector space at inference time for retrieval. In this setting, the query encoder induces a retrieval rule by mapping the query $q$ to a vector $E(q) \in \R^d$ then returning the $k$ documents with largest inner product score $s(q, d_i)=\langle E(q), x_{d_i}\rangle$:
\[
  R_E^{(k)}(q) := \Topk\bigl(E(q); \mathcal{X}\bigr).
\]
We write $R_E(q) = R_E^{|G(q)|}(q)$ for brevity, and when the underlying query encoder $E$ is obvious from context, we denote the expected recall of the single-vector retriever as
\[r_{\operatorname{encoder}}(\mathcal{Q}; \mathcal{X}) := \E_{q \sim \mathcal{Q}}\operatorname{recall}(R_E(q), G(q))].\]

\noindent\textbf{The SQ learning model.}
Introduced by \citet{kearns93:efficient}, in the SQ model we seek to learn a function $f: x \mapsto y$ given sample access to an unknown distribution $\Pi$ over labeled training examples $(x,y)$ only through statistical queries. Statistical queries differ from arbitrary sample access in that they can only output aggregate statistics of the training data.

\begin{definition}\label{def:sq}
Let $\Pi$ be a distribution over training samples and let $\tau > 0$ be a fixed \emph{tolerance} parameter. The \emph{statistical query} oracle $\STAT_\Pi(\tau)$ accepts any query function
$\varphi : (x,y) \mapsto [-1,1]$ and returns some value $v \in \EE_\Pi[\varphi] \pm \tau$.
\end{definition}
The oracle may be adversarial within the tolerance $\tau$. Lower bounds in this model apply to any algorithm implementable by an SQ oracle, including method-of-moments estimators, spectral
methods, and mini-batch SGD under precision and batch-size conditions
\citep{feldman17:characterization, abbe21:power}.\\

\section{Beyond Geometric Limitations}\label{sec:geom}
Prior work on geometric limitations of embedding-based retrieval has focused on characterizing the minimal embedding dimension in which, under perfect document embedding, all document subsets of size $k$ can be realized. In practice, however, documents are embedded imperfectly by a semantic encoder, and there are only a small number of gold sets to retrieve. These observations motivate a dual notion of geometric capacity: for a frozen index $\mathcal{X}$ and query distribution $\mathcal{Q}$, what is the best possible recall $c_{\operatorname{geom}}(\mathcal{Q}; \mathcal{X})$ achievable by a single-vector query encoder? Formally, we write

\begin{equation}\label{eq:geom-cap}
c_{\operatorname{geom}}(\mathcal{Q}; \mathcal{X}):=\underset{q \sim \mathcal{Q}}{\E}[\sup_{E}\rec(R_E(q), G(q))].
\end{equation}

The goal of this section is to characterize the gap between geometric capacity and the recall achieved by learned query encoders on real-world retrieval benchmarks. To this end, the section is organized as follows. In \Cref{sec:ranking}, we describe how to efficiently compute a capacity lower bound $\hat{c}_{\operatorname{geom}}(\mathcal{Q}; \mathcal{X})\leq c_{\operatorname{geom}}(\mathcal{Q}; \mathcal{X})$ via a direct optimization approach.
Then, in \Cref{sec:experiments}, we apply this approach to a number of real-world retrieval benchmarks. We then evaluate the achieved recall $r_{\operatorname{encoder}}(\mathcal{Q}; \mathcal{X})$ of various pre-trained and fine-tuned semantic query encoders on these benchmakrs. 
If $r_{\operatorname{encoder}}(\mathcal{Q}; \mathcal{X})\ll \hat{c}_{\operatorname{geom}}\leq c_{\operatorname{geom}}(\mathcal{Q}; \mathcal{X})$, the tested query encoders fail to attain the recall enabled by the geometric capacity of the benchmark. 

\subsection{Capacity lower bounds via ranking SVM}\label{sec:ranking}
The purpose of this section is to provide a practical method for computing the lower bounds on the geometric capacity $c_{\operatorname{geom}}(\mathcal{Q}; \mathcal{X})$ of a given retrieval benchmark $(\mathcal{Q}; \mathcal{X})$.
To motivate our approach, first notice that the gold set of documents $G(q)$ for query $q$ is perfectly retrievable by a single-vector query embedder precisely when some query embedding $E(q)$ ranks every gold document above every non-gold document by inner product score. That is,  $\langle E(q), x_p-x_n\rangle > 0$ for each gold document $p \in G(q)$ and each non-gold document $n \notin G(q)$. Thus, finding such a query embedding amounts to finding a separating hyperplane for the pair-wise difference vectors $\{x_p - x_n\}_{p \in G(q), n \notin G(q)}$. This is exactly the convex feasibility problem underlying ranking SVM \citep{herbrich99:large,joachims06:training}, an early learned document ranking approach that attracted significant attention in the 2000s.

Formally, let $\mathcal{C} = G(q) \times G(q)^c$ index pairs of gold and non-gold documents for query $q$. For a fixed penalty $\alpha > 0$, the soft-margin ranking SVM constructs a query vector by minimizing a regularized pairwise ranking loss:
\vskip -2em
\begin{align}\label{eq:svm}
    \min\limits_{x_q \in \mathbb{R}^d, \; \xi \succeq 0} \quad &\frac{1}{2}\|x_q\|^2 + \alpha \sum_{(p,n) \in \mathcal{C}} \xi_{p,n} \\
    \text{s.t.} \quad &\langle x_q, x_p - x_n \rangle \geq 1 - \xi_{p,n}, \quad \forall (p,n) \in \mathcal{C} \notag
\end{align}

The constraints ensure that all documents are properly ranked up to the slack terms $\xi_{p,n}$. 
A feasible zero-slack solution ranks every gold document above every non-gold document and therefore certifies that the gold set $G(q)$ is perfectly retrievable under the document geometry. Furthermore, in any feasible solution (with or without slack), the recall attained by $x_q$ is a lower bound on the best recall possible for $q$ under the document geometry. 
Thus, we compute the lower bound $\hat{c}_{\operatorname{geom}}(\mathcal{Q}; \mathcal{X})$ on \eqref{eq:geom-cap} by solving the ranking SVM for every query in a dataset, and averaging the recall of highest-scoring $|G(q)|$ documents under the produced query embeddings $x_q^*$.

A practical challenge is that the ranking SVM formulation can have a huge number of pair-wise ranking constraints. We therefore solve \eqref{eq:svm} using the cutting-plane method introduced by \citet{joachims06:training}, which avoids explicit storage of all pairwise constraints. In practice, this allows us to directly optimize query embeddings for retrieval benchmarks with up to $\sim$500k documents within 2-4 hours. For each query, we grid search over the penalty $\alpha = \frac{C}{n}$ for $C \in \{10, 10^2, \dots, 10^6\}$.

\subsection{Experiments}\label{sec:experiments}
In this section, we explore the gap between geometric capacity and the recall achieved by learned query encoders on real-world retrieval datasets. \\

\noindent\textbf{Datasets.} We evaluate on three popular information retrieval benchmarks. Additional details on the composition of each dataset can be found in Appendix~\ref{app:exp-setup}.

\begin{CompactItemize}
  \item LIMIT \citep{weller26:theoretical} is a synthetic dataset designed to exhibit limitations of embedding-based retrieval. Each query names an attribute that appears in exactly two documents in a corpus of size 50k. Gold sets cover all top-2 subsets of 46 documents, a construction in which no sufficiently low-dimensional embedding model can achieve perfect recall.
  \item BRIGHT \citep{su25:bright} is a reasoning-intensive QA benchmark that spans 12 categories, 
  each ranging from 7.9k to 414k documents. Queries are complex questions, and document relevance is determined by the extent to which it helps answer the question. Unlike many other retrieval settings, in BRIGHT gold documents often share little keyword overlap with the query. 
  \item BEIR \citep{thakur21:beir} is a diverse suite of retrieval tasks with corpora ranging from 3.6k to millions of documents. Though originally intended to evaluate zero-shot generalization, the diverse categories are highly useful for our in-domain experiments. Due to compute and licensing constraints, we restrict our evaluation to 7 categories listed in the appendix. \\
\end{CompactItemize}

\noindent\textbf{Models.} We evaluate two single-vector embedding models: \textsf{DistilBERT} \citep{sanh20:distilbert}, a lightweight BERT encoder with embedding dimension $768$, and \textsf{Qwen3-Embedding-4B} \citep{zhang25:qwen3}, a large semantic encoder with embedding dimension $2560$. Following the setup of \citet{weller26:theoretical}, as a point of comparison we also measure recall achieved by models with a different inductive bias: lexical search via \textsf{BM25} \citep{robertson2009probabilistic}, and multi-vector retrieval with late-interaction scoring via \textsf{GTE-ModernColBERT} \citep{khattab20:colbert,ModernColBERT}. We note that geometric capacity is only a ceiling for single-vector retrievers.\\

\noindent\textbf{Fine-tuning.}
When train-test splits are not pre-specified, we partition queries into a 70/30 train-test split. The three semantic embedding models are each fine-tuned using LoRA \citep{hu22:lora} (rank 16) with AdamW optimizer \citep{loshchilov17:decoupled} and a contrastive multi-positive InfoNCE loss for 3,000 steps at the default learning rate. Negatives are sampled for corpora exceeding $200$k documents (100,000) and for multi-vector training (512). Due to limited data for validation, we report the model checkpoint at step 3,000.

\subsubsection{Results}
For full numerical results, see Appendix~\ref{app:results}.\\\\
\noindent\textbf{Geometric capacity is not limiting.} \Cref{fig:gap} shows the gap between the expected recall achieved by (1) pre-trained and fine-tuned single-vector embedding models ($r_{\operatorname{encoder}}(\mathcal{Q}; \mathcal{X})$) and (2) the directly-optimized geometric capacity lower bound ($\hat{c}_{\operatorname{geom}}(\mathcal{Q}; \mathcal{X}) \leq c_{\operatorname{geom}}(\mathcal{Q}; \mathcal{X})$) across our benchmark datasets. The results are striking; outside of BEIR's \textsf{trec-covid} (which is an outlier with median gold sets of size 529), we see geometric capacity of 95\%+ expected recall across nearly all retrieval tasks and both document embedding models. Notably, even the lightweight semantic encoder DistilBERT is able to embed documents in such a way that near-perfect retrieval is (geometrically) possible. On the flip side, pre-trained and fine-tuned query encoders rarely exceed 50\% average recall and often perform less than 25\% on the evaluation queries. These results are 
consistent with earlier empirical observations that fine-tuned single-vector models struggle to generalize on LIMIT-like datasets \citep{s26:strengths}.

\begin{figure}
\centering
  \includegraphics[width=0.9\textwidth]{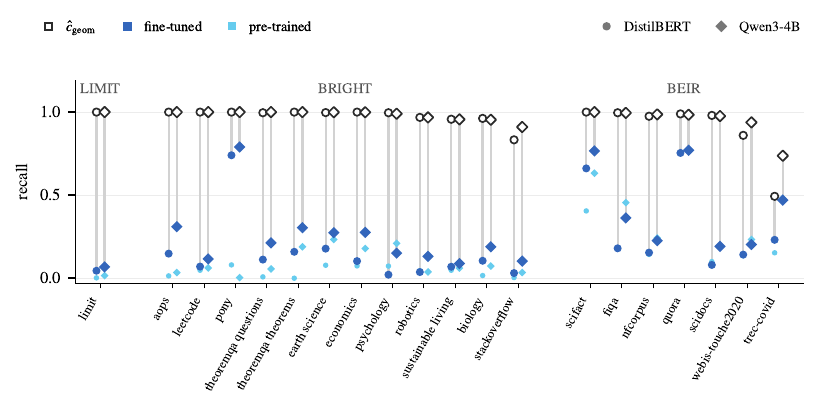}
  \vskip -1em
  \caption{Measured gap between geometric capacity of the document index and the expected recall of single-vector query encoders.}
  \label{fig:gap}
\end{figure}

\begin{figure}
  \centering
  \includegraphics[width=0.9\textwidth]{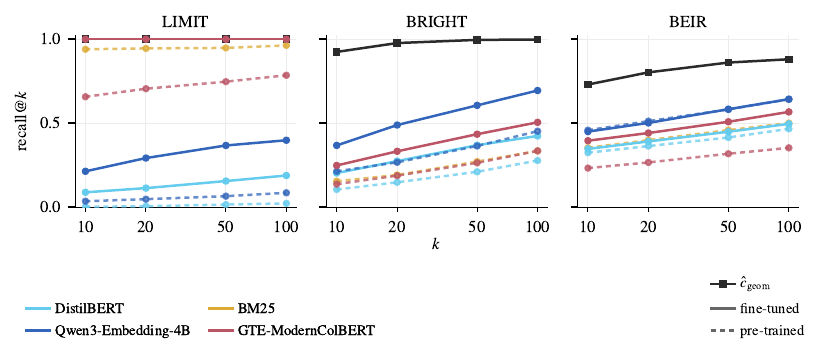}
  \vskip -1em
  \caption{Expected recall@$k$ for $k \in \{10, 20, 50, 100\}$ averaged
  over the datasets in each benchmark. On
  LIMIT, lexical and fine-tuned ColBERT-style multi-vector retrieval are near-saturated at every
  cutoff while the single-vector encoders remain far below even at $k=100$.}
  \label{fig:k-scale}
\end{figure}

While \Cref{fig:gap} shows results for only the top $|G(q)|$ highest-scoring documents, we note that over-retrieving a larger number of candidate documents than needed and reranking them is common in modern retrieval pipelines. \Cref{fig:k-scale} shows how performance scales as the size of the returned set of documents $k$ is allowed to increase. Although retrieval quality increases with the number of documents retrieved across all benchmarks, expected recall lies below the geometric ceiling lower bound for single-vector models even with $k=100$ returned documents.

\noindent\textbf{Inductive bias and generalization.} To better understand the poor performance of single-vector query encoders, in \Cref{fig:train-test} we show expected recall achieved on train/test splits across each dataset. We find that fine-tuning generally fits the training queries but does not generalize. On BRIGHT, fine-tuned \textsf{DistilBERT} and \textsf{Qwen3-Embedding-4B} models achieve an average train recall of $0.957$ and $0.982$, respectively. At the same time, average recall over held-out queries in the same categories is just $0.148$ and $0.245$---a clear signal that LoRA is failing to learn a generalizable single-vector mapping between query and documents. 

LIMIT cleanly isolates the role that inductive bias can play in generalization. Each of its $1000$ queries has a gold set drawn from shared pool of $46$ documents, so every gold document is seen during training. Still,
\textsf{Qwen3-Embedding-4B} attains recall of just
$0.068$ on held-out queries despite perfectly learning the training set. In contrast, a fine-tuned multi-vector retrieval model \textsf{GTE-ModernColBERT} attains train/test recall of $1.000/0.998$. This may be explained, in part, by the fact that late interaction scoring provides a direct mechanism for matching query terms to document terms \citep{khattab20:colbert}.

\begin{figure}
  \centering

\includegraphics[width=0.9\textwidth]{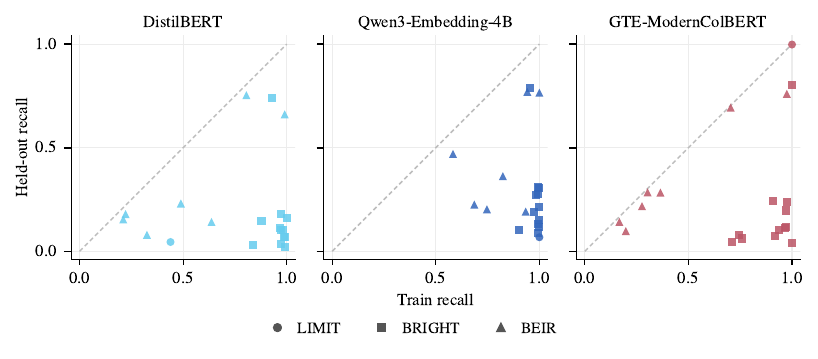}
\vskip -1em
  \caption{Expected recall on training vs. held-out queries for each fine-tuned retrieval model. Points in the lower right reflect near-perfect recall on training data but limited generalization.}
  \label{fig:train-test}
\end{figure}

\section{Computational Barriers in the SQ Model} In \label{sec:barriers}
\Cref{sec:geom}, we observed that learned query encoders fail to exploit the geometric capacity of the document index on several real-world retrieval benchmarks. 
This gap may arise for several reasons:
\begin{CompactItemize}
    \item Perhaps, even though good query embeddings exist geometrically, they cannot be realized by a succinct query embedding model.
    \item Perhaps they can be realized by a succinct model, but we haven't yet found successful learning algorithms for training it.
    \item Perhaps they can be realized by a succinct model, but learning it is computationally hard. 
\end{CompactItemize}

In this section, we give theoretical evidence that the third bullet---computational hardness for learning query embeddings---may occur. 
In particular, we give a construction in which perfect recall can be achieved by a small one-hidden-layer ReLU neural network (\Cref{lemma:realize}), but any learner in the SQ model requires exponential computation time to produce an encoder with better-than-random recall (\Cref{thm:main}). This shows that, under SQ learners, high geometric capacity does not mean the certifying query embedding is actually learnable. Detailed proofs can be found in Appendix~\ref{app:sq}.


\subsection{Retrieval construction}

\begin{figure}
\centering
\xorpanel{(a) \emph{waterproof insulated} jacket}%
  {gray!22}{gray!22}{white}{0}{1}{1}{011}{doc 3}%
\hfill%
\xorpanel{(b) \emph{lightweight waterproof} jacket}%
  {gray!22}{white}{gray!22}{1}{0}{1}{101}{doc 5}
\caption{A simple intent. Each document code bit is the parity of a fixed
pair of keywords. The two queries share \emph{waterproof} and only differ in a
single term, yet the resulting retrieval codes are dramatically different.}
\label{fig:xor}
\end{figure}

At a high level, we construct a retrieval task in which queries may be interpreted in many ways, called \emph{intents}, each of which produces a different gold set of documents. A learner is given access to a labeled training set of (query, gold set) pairs generated by a hidden intent. Computational hardness will stem from making samples from different intents look indistinguishable under aggregate statistics.\\

\noindent\textbf{Notation.} 
Our analysis will involve arithmetic operations both over the binary finite field $\FF_2$ and over $\mathbb{R}$. For clarity, we use ``$\oplus$'' for binary addition (xor) over $\FF_2$. Binary matrix-vector multiplication $Aq$ is defined by $(Aq)_i=\oplus_jA_{ij}q_j$. 
We further use $|b|$ to denote the number of $1$'s in a binary vector $b$. 
Embedding vectors will lie in a vector space over $\R$, and their inner products and the sums defining retrieval scores use ordinary real-valued arithmetic. \\

\noindent\textbf{Query model.}
We model a natural language query as a binary string of length $P$  which may represent, for example, a bag-of-words indicator vector over a vocabulary of $P$ keywords. Queries are drawn uniformly at random from $\FF_2^P$. That is, the query distribution is 
$\mathcal{Q} = \Unif(\FF_2^P)$.\\

\noindent\textbf{Document corpus.} Let $d = \Theta(\log P)$. The corpus consists of $n=2^d$ documents, each assigned a unique binary code $w \in \FF^d_2$ and embedded as the hypercube vertex \[h(w) = \bigl( (-1)^{w_1}, \dots, (-1)^{w_d} \bigr)^{\!\top} \in \{-1,1\}^d.\]
An \emph{intent} $A \in \FF^{d \times P}_2$ maps each query $q$ to a document code $Aq$ which serves as the center of the query's gold set. This matrix-vector product is over $\FF_2$, so the $i^{th}$ bit of the binary string $Aq$ is the parity of keywords $j$ in the query for which $A_{ij} = 1$. See \Cref{fig:xor} for a simple instantiation of $A$ and $q$.

Fix an integer radius $0\leq \rho< d$ independent of $P$. The gold set for query $q$ under intent $A$ consists of documents with binary codes within hamming distance $\rho$ from $Aq$, of which there are
$k = \sum_{j=0}^{\rho}\binom{d}{j} = (\log P)^{O(\rho)}$.\\

\noindent\textbf{Training distribution.} Each training sample $(q, G_A(q))$ consists of a query $q \sim \mathcal{Q}$ and
its full gold document set. Note that perfect knowledge of the gold set is a strong
supervision signal that requires relevance judgments over the entire corpus of
documents. Proving computational hardness is strictly harder in this regime than
under partial feedback produced by, e.g., clicks and user interactions. We write
$D_A$ for the distribution over training pairs induced by the intent $A$.

\subsection{Perfect geometric realizability}
First, we show that the document embeddings on the hypercube defined above support a single-vector query encoder that achieves perfect recall.
\label{sec:geometry}
\begin{restatable}{lemma}{PerfRealize}\label{lemma:realize}
  For every intent $A$, there exists a query encoder with perfect top-$k$ recall, which is representable as a polynomial-size, one-hidden-layer ReLU network.
\end{restatable}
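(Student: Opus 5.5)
The plan has two parts: first identify the ideal query embedding, which is the hypercube vertex of the center code $h(Aq)$, and then show that a small one-hidden-layer ReLU network computes it exactly on $\FF_2^P$.

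\textbf{Geometric step.} For any $w,w'\in\FF_2^d$,
\[
\langle h(w), h(w')\rangle = d - 2\,|w\oplus w'|.
\]
So if $E(q)=h(Aq)$, the inner product with a document's embedding is a strictly decreasing function of the Hamming distance between that document's code and $Aq$. The top $k$ documents by score are therefore exactly the $k=\sum_{j\le\rho}\binom{d}{j}$ documents within distance $\rho$ of $Aq$, which is $G_A(q)$. Ties cannot straddle the cutoff: every document at distance $\le\rho$ scores at least $d-2\rho$, and every other document scores at most $d-2\rho-2$. This gives recall $1$ for every $q$.

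\textbf{Network step.} Coordinate $i$ of the target is $(-1)^{\langle a_i,q\rangle}$, where $a_i$ is the $i$-th row of $A$. Let $s_i(q)=\sum_j A_{ij}q_j$, computed over $\R$; it is an integer in $\{0,\dots,m_i\}$ with $m_i=|a_i|\le P$. Then $(-1)^{\langle a_i,q\rangle}=g(s_i)$, where $g(s)=(-1)^s$. The piecewise-linear interpolant of $g$ at the integers $0,\dots,m_i$ can be written as
\[
g(s)=1-2\,\mathrm{ReLU}(s)+\sum_{t=1}^{m_i-1} 4(-1)^{t+1}\,\mathrm{ReLU}(s-t).
\]
The slopes alternate $-2,+2,\dots$, so each breakpoint at $t$ changes the slope by $\pm4$. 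This uses at most $P$ hidden units, each computing $\mathrm{ReLU}(\langle a_i,q\rangle - t)$, which is affine in $q$ followed by a ReLU. The constant term can go into the output bias, or be produced by one hidden unit on a constant input.

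\textbf{Assembly.} Stack these units for all $d$ coordinates. The result is a one-hidden-layer ReLU network with $O(dP)=O(P\log P)$ hidden units and $d$ linear outputs that equals $h(Aq)$ on every binary $q$. Polynomial size and the recall claim then follow from the geometric step.

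The only point needing care is verifying the alternating-slope ReLU identity at the integer points. A short induction on $s$ settles it. Beyond that, the argument is routine.
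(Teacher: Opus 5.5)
Your proposal is correct and follows the paper's proof essentially step for step: the encoder $q \mapsto h(Aq)$, the identity $\langle h(w), h(w')\rangle = d - 2|w \oplus w'|$, and the same alternating-slope ReLU spline for each parity coordinate, which gives width $O(dP)$. Your explicit check that no ties straddle the top-$k$ cutoff (gold documents score at least $d-2\rho$, all others at most $d-2\rho-2$) and your use of $m_i = |a_i|$ breakpoints instead of $P$ are small refinements of details the paper leaves implicit.
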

\begin{proof}[Proof sketch] Consider the query encoder $E: q \mapsto h(Aq)$ for the hidden intent $A$. For query $q$, the score assigned to document $w$ is \[s(w) = \langle h(Aq), h(w)\rangle = \sum_{i=1}^d \left(\textbf{1}\{(Aq)_i = w_i\} - \textbf{1}\{(Aq)_i \neq w_i\}\right) = d - 2 d_H(Aq, w).\]
Thus, the top-$k$ retrieval set is the $k$ closest documents to $Aq$ in Hamming distance. The proof of realizability by a shallow neural network follows standard arguments for representing parity as a ReLU spline and can be found in \Cref{app:realize}.
\end{proof}

\subsection{SQ-hardness}
\Cref{lemma:realize} shows that a simple, high-recall query encoder exists that may act as a learning target. Nonetheless, our main result shows that learning this encoder---or in fact any retrieval rule with better-than-random recall---can be computationally hard.
Throughout this section, every retrieval rule returns exactly \(k\) documents, and its recall is the expectation over a fresh query.

\begin{theorem}\label{thm:main}
There is a family of intents $\mathcal{A}$ such that, for any
inverse-polynomial tolerance $\tau$ and inverse-polynomial target advantage
$\varepsilon$, every SQ learner that achieves recall at least $k/n + \varepsilon$
for every intent with constant success probability requires $2^{\Omega(P)}$
statistical queries.
\end{theorem}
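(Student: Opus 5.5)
We reduce to an SQ lower bound for a search problem via statistical dimension, in the style of \citet{feldman13:statistical}. The family $\mathcal{A}$ is built from parities. Fix a set $\mathcal{S}$ of $2^{\Omega(P)}$ subsets of $[P]$ of size $\Theta(P)$ whose indicator vectors have pairwise small overlap; a random code gives this. For each $d$-tuple $(S_1,\dots,S_d)$ of distinct elements of $\mathcal{S}$, let $A$ be the matrix whose $i$-th row is the indicator of $S_i$. Then $(Aq)_i=\chi_{S_i}(q)$ is a parity of the query.

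The key structural fact is Fourier orthogonality under $q\sim\Unif(\FF_2^P)$. For distinct intents $A\ne A'$, take any nonzero combinations: $\bigoplus_{i\in T}S_i \oplus \bigoplus_{j\in T'} S'_j$ is nonempty unless it cancels exactly. We choose $\mathcal{A}$ so that this never cancels. For instance, draw rows from a linearly independent set, or require the row spaces of distinct intents to intersect only in $0$; a greedy or random construction still leaves $2^{\Omega(P)}$ intents. Then the joint law of $(Aq, A'q)$ is uniform on $\FF_2^{2d}$. Since the training label $G_A(q)$ is a deterministic function of $Aq$, each $D_A$ is the product of $\Unif(\FF_2^P)$ with a label given by $Aq$.

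Let $D_0$ be the reference distribution $q\sim\Unif$ with an independent label drawn as the ball around a uniform code. Its marginal on $(q)$ and on the label matches every $D_A$. For a query $\varphi$, write $\E_{D_A}[\varphi]-\E_{D_0}[\varphi]=\langle \hat\varphi_A, 1\rangle$, where the relevant part of $\varphi$ lives on the characters $\chi_U(q)$ with $U$ in the row space of $A$, paired with label characters. Pairwise correlations $\chi_{D_A,D_{A'}}$ relative to $D_0$ vanish for $A\ne A'$ by the orthogonality above, and the self-correlation is at most $n=2^d=\mathrm{poly}(P)$. This gives a statistical dimension with average correlation $\gamma$: any subfamily of size $|\mathcal{A}|/m$ has average correlation at most $\mathrm{poly}(P)\cdot m/|\mathcal{A}|$. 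By Feldman et al.'s theorem, any SQ algorithm with tolerance $\tau$ needs $\Omega(\tau^2|\mathcal{A}|/\mathrm{poly}(P)) = 2^{\Omega(P)}$ queries to solve the search problem. This holds provided each output retrieval rule is \emph{good} for only a tiny fraction of intents.

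The main obstacle is that last step: a fixed rule $R$ (a map $q\mapsto$ $k$-set) achieves recall $\ge k/n+\varepsilon$ for at most a $2^{-\Omega(P)}$ fraction of $\mathcal{A}$. We write recall as
\[
\frac1k\E_q\sum_{w\in R(q)}\mathbf 1\{d_H(w,Aq)\le\rho\}.
\]
We expand the indicator $f_w(c)=\mathbf 1\{d_H(w,c)\le \rho\}$ in $c=Aq$ in characters $\chi_T(c)=\chi_{A^\top T}(q)$. The constant term gives exactly $k/n$. Each nonconstant term has the form $\E_q[g(q)\chi_{A^\top T}(q)]$, where $g(q)=\sum_{w\in R(q)}\hat f_w(T)$ with $|\hat f_w(T)|\le k/n$. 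So the advantage is bounded by $\sum_{T\ne0}|\hat g(A^\top T)|$, summed over $n-1$ characters.

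Averaging over $A\in\mathcal{A}$ uses Parseval. Because the sets $A^\top T$ are distinct across pairs $(A,T)$ by our row-space disjointness, we get $\sum_{A,T}\hat g(A^\top T)^2\le \|g\|_2^2\le k^2$. Cauchy–Schwarz then bounds the mean squared advantage by $\mathrm{poly}(n,k)/|\mathcal{A}|$. Markov's inequality shows at most a $\mathrm{poly}(P)/(\varepsilon^2|\mathcal{A}|)=2^{-\Omega(P)}$ fraction of intents are solved by $R$.

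Finally, we handle the randomized learner with constant success probability by the standard argument. The number of queries must be $2^{\Omega(P)}$, or else the simulated oracle answering $\E_{D_0}[\varphi]$ is consistent with most intents, and its output then succeeds on only a negligible fraction of them.
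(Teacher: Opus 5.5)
Your argument is correct in substance. It rests on the same two facts as the paper's proof. First, relative to the null $\mathcal{N}_0$ (your $D_0$), the likelihood-ratio deviations $g_A = n\mathbf{1}\{y=Aq\}-1$ are pairwise orthogonal. Second, a fixed retrieval rule beats $k/n+\varepsilon$ on only polynomially many intents. You package these facts differently.

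The paper proves everything from scratch with one Bessel-inequality lemma, $\sum_{A}\Delta_A(\varphi)^2\le (n-1)\Var_{\mathcal{N}_0}(\varphi)$, which it uses twice:
\begin{itemize}
\item for an arbitrary query, so each query is answered non-trivially for at most $(n-1)/\tau^2$ intents;
\item for the recall query $\varphi_R(q,y)=\rec(R(q),G(y))$, whose null variance is at most $k/n$, so a fixed rule succeeds on at most $k/\varepsilon^2$ intents.
\end{itemize}
An explicit simulated-null-oracle count then gives $T_{\mathrm{SQ}}\ge(\delta|\mathcal{A}|-k/\varepsilon^2)\tau^2/(n-1)$.

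You instead hand the zero pairwise correlations to the SDA theorem of Feldman et al. That theorem absorbs the adversarial-oracle and randomization bookkeeping, and gives a VSTAT bound as a bonus. You also bound the fixed-rule success set by expanding the ball indicator in characters of $q$, rather than reusing the orthogonality lemma. The paper's route is elementary and has explicit constants. Yours is modular but depends on an external theorem and loses polynomial factors; those losses are harmless against $2^{\Omega(P)}$.

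Two things need tidying.

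First, your family condition is more than you need. You require that row spaces of distinct intents meet only in $0$, so that $(Aq,A'q)$ is jointly uniform. But $\E_{\mathcal{N}_0}[g_Ag_{A'}]=n\Pr_q[Aq=A'q]-1$ already vanishes once $A\oplus A'$ has rank $d$, which is the paper's condition. Also, of your suggested constructions only the random one works:
\begin{itemize}
\item $d$-tuples drawn from a small-overlap code share rows, so their row spaces intersect.
\item Rows taken from a single linearly independent set give only $O(P/d)$ intents.
\end{itemize}
The random construction still needs its union bound, e.g.\ $N^2 2^{2d-P}<1$.

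Second, in the fixed-rule step your $g$ depends on $T$, so a single Parseval over all pairs $(A,T)$ is not available. Instead, apply Bessel for each fixed $T\neq 0$; the characters $\chi_{A^\top T}$ are distinct across $A$ because $(A\oplus A')^\top T\neq 0$. Then sum over $T$. This bounds the sum of squared advantages by $k^2$, and hence the number of good intents by $k^2/\varepsilon^2$, which suffices.
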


At a high-level, the idea is to construct a large family of intents $\mathcal{A}$ in which (1) achieving a non-trivial advantage over random $k/n$ recall requires identifying the hidden intent $A \in \mathcal{A}$ up to a small set of candidates, and (2) samples from all intents look identical under statistical queries. That is, any two intents map to a shared center at the same rate as a pure chance:
\begin{equation}\label{eq:chance-collision}
  \Pr_{q \sim \mathcal{Q}}[Aq = A'q] = \frac{1}{n} \quad \text{for all $A \neq A' \in \mathcal{A}.$}
\end{equation}

\noindent\textbf{Intent family.}
We show the existence of an intent family satisfying  \eqref{eq:chance-collision} 
using the probabilistic method. Consider the following randomized construction: draw $N = \bigl\lfloor \frac{2^{P/3}}{n} \bigr\rfloor = 2^{\Omega(P)}$ binary matrices $\mathcal{A} = \{A^{(1)}, \dots, A^{(N)}\}$ independently and uniformly from $\FF_2^{d \times P}$. 
Given a query $q\in\FF_2^P$, observe that two intents $A^{(i)}, A^{(j)}$ map to the same center ($A^{(i)}q=A^{(j)}q$) precisely when $(A^{(i)} \oplus A^{(j)})q = 0$, that is, when $q$ lies in $\ker(A^{(i)} \oplus A^{(j)})$, where $\ker(A)=\{v\in\FF_2^P:Av=0\}$. By the rank nullity theorem for $\FF_2^P$, $|\ker(A)|=2^{P-\operatorname{rank}(A)}$ for every $A$ where the matrix rank is over $\FF_2$. Hence,
a uniform $q$ falls in $\ker(A^{(i)} \oplus A^{(j)})$ with probability $2^{-P}|\ker(A^{(i)} \oplus A^{(j)})|
  = 2^{-\operatorname{rank}(A^{(i)} \oplus A^{(j)})}$,

Thus, $\mathcal{A}$ consists of $N = 2^{\Omega(P)}$ distinct intents satisfying
\eqref{eq:chance-collision} precisely when each $\FF_2$-matrix sum $A^{(i)} \oplus A^{(j)}$ has rank $d = \log n$. The following lemma shows that this occurs with positive probability.

\begin{restatable}{lemma}{lemFamily}\label{lem:family}
With probability at least $1/2$, every pairwise difference
has $\operatorname{rank}(A^{(i)} \oplus A^{(j)}) = d$.
\end{restatable}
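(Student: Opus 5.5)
The plan is a union bound over the $\binom{N}{2}$ pairs. The key observation is that for fixed $i \neq j$, the matrix $A^{(i)} \oplus A^{(j)}$ is itself uniformly distributed on $\FF_2^{d \times P}$. This holds because $A^{(i)}$ and $A^{(j)}$ are independent and uniform, so conditional on $A^{(j)}$, the map $A^{(i)} \mapsto A^{(i)} \oplus A^{(j)}$ is a bijection of $\FF_2^{d\times P}$. It therefore suffices to bound the probability that a single uniform $d\times P$ matrix $M$ over $\FF_2$ has rank strictly less than $d$, and then multiply by the number of pairs.

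For the single-matrix bound, I would note that $\operatorname{rank}(M) < d$ exactly when some nonzero $u \in \FF_2^d$ satisfies $u^\top M = 0$, i.e.\ the rows are linearly dependent. For a fixed nonzero $u$, the vector $u^\top M \in \FF_2^P$ is uniform: each column of $M$ is an independent uniform vector in $\FF_2^d$, and $u^\top c$ is a uniform bit for uniform $c$ when $u \neq 0$. Hence $\Pr[u^\top M = 0] = 2^{-P}$. A union bound over the $2^d - 1 < n$ choices of $u$ gives
\[
\Pr[\operatorname{rank}(M) < d] \le n\, 2^{-P}.
\]

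Combining the two steps,
\[
\Pr\bigl[\exists\, i<j:\ \operatorname{rank}(A^{(i)} \oplus A^{(j)}) < d\bigr] \le \binom{N}{2} n\, 2^{-P} \le \tfrac12 N^2 n\, 2^{-P}.
\]
With $N \le 2^{P/3}/n$, we get $N^2 n \le 2^{2P/3}/n \le 2^{2P/3}$, so the failure probability is at most $\tfrac12 \cdot 2^{-P/3} \le 1/2$. (The slack here is large; it would allow far more intents.) As a side consequence, full rank of every pairwise difference forces the $A^{(i)}$ to be distinct, since $A^{(i)}\oplus A^{(j)}\neq 0$. This yields the $N$ distinct intents used afterward.

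The only step requiring care is the uniformity of $u^\top M$ for a fixed nonzero $u$, and that argument is elementary. The remaining work is the bookkeeping that checks $N = \lfloor 2^{P/3}/n\rfloor$ makes the final bound at most $1/2$. Since $n = 2^{\Theta(\log P)} = \mathrm{poly}(P)$, the value of $N$ remains $2^{\Omega(P)}$, as claimed.
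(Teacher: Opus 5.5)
Your proof is correct and has the same structure as the paper's. Both arguments use the uniformity of $A^{(i)} \oplus A^{(j)}$, a bound of $n2^{-P}$ on the probability that a single uniform matrix is rank-deficient, and a union bound over the $\binom{N}{2}$ pairs. The only variation is that you union-bound over nonzero $u \in \FF_2^d$ with $u^\top M = 0$, whereas the paper builds the matrix row by row and bounds by $2^{\ell-1-P}$ the chance that the $\ell$-th row lies in the span of the earlier ones; both routes give exactly $(2^d-1)2^{-P}$.
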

\begin{proof}[Proof sketch]
$A^{(i)} \oplus A^{(j)}$ is a uniform binary matrix. Building such a matrix by sampling row-by-row, the $\ell^{th}$ row lies in the span of the first $\ell - 1$ with probability at most $2^{\ell-1}/2^P$ since the span of $\ell-1$ binary vectors contains at most $2^{\ell-1}$ binary vectors. 
 A union bound over $d$ rows and the $\binom{N}{2}$ difference matrices completes the proof; see \Cref{app:sq} for details. 
\end{proof}

\noindent\textbf{SQ lower bound.} With our exponential-size family of intents $\mathcal{A}$ in hand, we are now ready to state our main computational lower bound. 
\begin{lemma}\label{lem:sqbarrier}
Let $\mathcal{A}$ be a family of intents satisfying condition \eqref{eq:chance-collision}. Any randomized learner that outputs, for every intent $A \in \mathcal{A}$, a retrieval rule achieving recall at least $k/n + \varepsilon$ with probability at least $\delta$ requires $T_{\operatorname{SQ}}$ statistical queries at tolerance $\tau$, where
\begin{equation}\label{eq:sq}
  T_{\operatorname{SQ}} \;\ge\; \Bigl( \delta|\mathcal{A}| - \tfrac{k}{\varepsilon^2} \Bigr)
  \cdot \frac{\tau^2}{n-1} .
\end{equation}
\end{lemma}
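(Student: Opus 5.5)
This is a standard SQ-dimension argument in the style of \citet{feldman13:statistical}. We compare each $D_A$ to a single reference distribution $D_0$, in which $q\sim\mathcal{Q}$ and the gold set is the Hamming ball of radius $\rho$ around an independent uniform center $c\in\FF_2^d$. The proof has two halves: a query-counting bound showing that few queries cannot distinguish most $D_A$ from $D_0$, and a recall bound showing that a single output rule is good for few intents.

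\textbf{Step 1 (pairwise correlations).} Write $\varphi(q,G)=\varphi(q,c)$, where $c$ is the center of $G$; the center is determined by $G$ since $\rho<d$. For each $A\in\mathcal{A}$ define the likelihood ratio $\hat D_A = D_A/D_0 - 1$, so that $\hat D_A(q,c)=n\mathbf{1}\{Aq=c\}-1$.
\begin{itemize}
\item Each $D_A$ has $\|\hat D_A\|_{D_0}^2 = \E_{D_0}[\hat D_A^2] = n-1$.
\item For $A\neq A'$, condition \eqref{eq:chance-collision} gives
\[
\E_{D_0}[\hat D_A\hat D_{A'}] = n\Pr_q[Aq=A'q]-1 = 0,
\]
so the $\hat D_A$ are pairwise orthogonal.
\end{itemize}

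\textbf{Step 2 (few intents per query).} Run the learner against an oracle that answers every query $\varphi$ with $\E_{D_0}[\varphi]$. Intent $A$ is inconsistent with this answer only if
\[
\bigl|\E_{D_0}[\varphi\,\hat D_A]\bigr| = \bigl|\E_{D_A}[\varphi]-\E_{D_0}[\varphi]\bigr| > \tau .
\]
Since the $\hat D_A$ are orthogonal with norm $\sqrt{n-1}$, Bessel's inequality gives
\[
\sum_A \E_{D_0}[\varphi \hat D_A]^2 \le (n-1)\|\varphi\|^2 \le n-1 .
\]
So each query rules out at most $(n-1)/\tau^2$ intents, and $T$ queries rule out at most $T(n-1)/\tau^2$ intents in total. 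For a randomized learner, fix its internal randomness and average at the end. The same simulated transcript is then a valid oracle interaction for every intent not ruled out.

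\textbf{Step 3 (one output is good for few intents).} This is the step I expect to be the main obstacle. Fix the output rule $R$, returning $k$ documents $S(q)$ on each query, and let $f_A(q) = \rec(S(q),G_A(q)) - k/n$. By symmetry of Hamming balls, $\rec(S,G_A(q)) = |S\cap B(Aq)|/k$. I want to show that at most $k/\varepsilon^2$ intents satisfy $\E_q f_A \ge \varepsilon$. The plan is to view $g_A(q,c)=\mathbf{1}\{Aq=c\}$ and apply Bessel again in $L^2(D_0)$ to the function
\[
u(q,c)=\frac{|S(q)\cap B(c)|}{k} - \frac{k}{n}.
\]
We have $\E_q f_A = \E_{D_0}[u\,\hat D_A]$, and
\[
\|u\|^2 \le \E\Bigl[\bigl(|S\cap B(c)|/k\bigr)^2\Bigr] \le \E\bigl[|S\cap B(c)|\bigr]/k = k/n .
\]
Hence
\[
\#\{A:\E f_A\ge\varepsilon\}\cdot\varepsilon^2 \le (n-1)\cdot\frac{k}{n} \le k .
\]
The orthogonal system is normalized by $n-1$, which I must track carefully, and I would double-check this constant. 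The count is then at most $k/\varepsilon^2$.

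\textbf{Step 4 (combine).} Averaging over the learner's randomness, the simulated run succeeds for at least $\delta|\mathcal{A}|$ intents in expectation. Success requires that the intent survived all queries and that it lies among the at most $k/\varepsilon^2$ intents that the single output serves. Therefore
\[
\delta|\mathcal{A}| \le \frac{T(n-1)}{\tau^2} + \frac{k}{\varepsilon^2},
\]
which rearranges to \eqref{eq:sq}.
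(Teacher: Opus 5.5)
Your proposal is correct and follows the paper's proof essentially step for step. It uses orthogonality of the likelihood ratios $n\mathbf{1}\{Aq=c\}-1$ under the null, Bessel's inequality to bound the number of intents each query can distinguish by $(n-1)/\tau^2$, the same Bessel bound applied to recall-as-a-query to show a fixed rule serves at most $k/\varepsilon^2$ intents, and the null-oracle simulation averaged over the learner's seed. The one thing to tighten is Step 4: count the $\delta|\mathcal{A}|$ successes against the per-intent valid oracle $O_A$ (null answer when $|\Delta_A(\varphi)|\le\tau$, true answer otherwise), so that each successful intent is \emph{either} ruled out by some query \emph{or} served by the reference output $R_s$. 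That is a union, not the intersection your wording suggests, and it is exactly what yields your final inequality.
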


We observe that \Cref{lem:sqbarrier} implies \Cref{thm:main} for our intent family $\mathcal{A}$:

\begin{proof}[Proof of Theorem~\ref{thm:main}] 
Recall that $n=2^d\leq P^c$ for some constant $c>0$. 
For inverse-polynomial $\tau$ and $\varepsilon$ we have $\tau\geq P^{-c'}$,  $\varepsilon\geq P^{-c''}$ for some constants $c',c''>0$. 
By construction, $|\mathcal{A}| = N =  \lfloor 2^{P/3}/n\rfloor \geq 2^{P/3}/(2P^c)$. 
Since $\delta$ is constant and $k=(\log P)^{O(\rho)}$ with $\rho$ independent of $P$, we have $\delta|\mathcal{A}|\geq 2k/\varepsilon^2$ for sufficiently large $P$. Plugging these in \eqref{eq:sq} yields $T_{\mathrm{SQ}} \geq 2^{\Omega(P)}$.
\end{proof}

The remainder of this section is dedicated to the proof of \Cref{lem:sqbarrier}. Our road map is as follows. First, we show that a fixed retrieval rule can only achieve recall $k/n + \varepsilon$ on a small fraction of intents $\mathcal{A}$. Thus, beating baseline recall requires identifying the hidden intent $A \in \mathcal{A}$ up to a small candidate set. The bound then follows from counting how many intents each statistical query rules out.

To begin, it will be useful to define the distribution $D_A$ over $(q, Aq)$ pairs induced by intent $A$, and the null distribution $\mathcal{N}_0$ in which the query $q$ and gold set center $Aq$ are independent:
  \[D_A(q, y) = \frac{1}{2^P} \cdot \textbf{1}\{y = Aq\}, \quad\quad  \mathcal{N}_0(q, y) = \frac{1}{n\cdot 2^P}. \]

  \Cref{lemma:measure-gap} shows that any statistical query distinguishes $D_A$ from the null $\mathcal{N}_0$ at tolerance $\tau$ for only a small number of intents $A \in \mathcal{A}$. The proof is technical and left to Appendix~\ref{app:sq}.
  \begin{restatable}{lemma}{measureGap}\label{lemma:measure-gap}
    For any statistical query $\varphi: \FF^P_2 \times \FF^d _2 \to [-1, 1]$, define the measurement gap $\Delta_A(\varphi) := \E_{D_A}[\varphi] - \E_{\mathcal{N}_0}[\varphi]$ under intent $A$. Then, $|\Delta_A(\varphi)| > \tau$ for at most $\frac{(n-1)\cdot\operatorname{Var}_{\mathcal{N}_0}(\varphi)}{\tau^2}$ intents.
  \end{restatable}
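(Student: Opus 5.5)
The approach is the standard pairwise-correlation (statistical dimension) argument, carried out in $L^2(\mathcal{N}_0)$. Write $\bar\varphi := \varphi - \E_{\mathcal{N}_0}[\varphi]$, so that $\Var_{\mathcal{N}_0}(\varphi) = \|\bar\varphi\|^2_{\mathcal{N}_0}$. Define the likelihood ratios $L_A := D_A/\mathcal{N}_0 - 1$. Explicitly, $L_A(q,y) = n\,\textbf{1}\{y = Aq\} - 1$. Since $\E_{\mathcal{N}_0}[L_A] = 0$, we get
\[
\Delta_A(\varphi) = \E_{\mathcal{N}_0}[\varphi L_A] = \langle \bar\varphi, L_A\rangle_{\mathcal{N}_0}.
\]
The plan is to show that the $L_A$ for $A\in\mathcal{A}$ are pairwise orthogonal, each with squared norm $n-1$. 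A Bessel-type inequality then bounds how many of them can correlate with $\bar\varphi$ above $\tau$.

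First I compute the Gram matrix. For $A, A'\in\mathcal{A}$,
\[
\langle L_A, L_{A'}\rangle_{\mathcal{N}_0} = n^2 \Pr_{\mathcal{N}_0}[y = Aq,\, y = A'q] - 2 + 1 .
\]
Under $\mathcal{N}_0$, $y$ is uniform on $\FF_2^d$ and independent of $q$. Hence $\Pr[y=Aq=A'q] = \frac{1}{n}\Pr_q[Aq=A'q]$, and in particular $\Pr[y=Aq]=1/n$.
\begin{itemize}
\item For $A=A'$ this gives $n - 1$.
\item For $A\neq A'$, condition \eqref{eq:chance-collision} gives $\Pr_q[Aq=A'q]=1/n$, so the inner product is $n\cdot\frac{1}{n} - 1 = 0$.
\end{itemize}
So $\{L_A/\sqrt{n-1}\}_{A\in\mathcal{A}}$ is an orthonormal family in $L^2(\mathcal{N}_0)$.

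Next comes the counting step. Let $S = \{A : |\Delta_A(\varphi)| > \tau\}$. Bessel's inequality gives
\[
\sum_{A\in S} \frac{\Delta_A(\varphi)^2}{n-1} \le \|\bar\varphi\|^2_{\mathcal{N}_0} = \Var_{\mathcal{N}_0}(\varphi).
\]
Each term on the left exceeds $\tau^2/(n-1)$. Therefore $|S|\,\tau^2/(n-1) < \Var_{\mathcal{N}_0}(\varphi)$, which is the claimed bound.

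The main obstacle is just the Gram computation, specifically checking that the cross term is exactly zero. That check uses the uniformity of $y$ under $\mathcal{N}_0$ together with condition \eqref{eq:chance-collision}. If one prefers to avoid Bessel's inequality, an alternative is to pick signs $s_A = \operatorname{sign}\Delta_A$ and apply Cauchy--Schwarz to $\langle \bar\varphi, \sum_{A\in S} s_A L_A\rangle$. This yields $|S|\tau < \|\bar\varphi\|\sqrt{|S|(n-1)}$, which is the same bound.
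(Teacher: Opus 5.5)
Your proposal is correct and follows essentially the same route as the paper. You center the likelihood ratio (your $L_A$ is the paper's $g_A$), use the uniformity of $y$ under $\mathcal{N}_0$ together with condition \eqref{eq:chance-collision} to show $\{L_A/\sqrt{n-1}\}_{A\in\mathcal{A}}$ is orthonormal in $L^2(\mathcal{N}_0)$, and conclude by Bessel's inequality applied to $\varphi - \E_{\mathcal{N}_0}[\varphi]$; your Cauchy--Schwarz variant yields the same bound.
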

  Treating recall as a statistical query, an immediate consequence is that no fixed retrieval rule can simultaneously achieve better-than-random recall for many intents.
\begin{restatable}{corollary}{intentCov}
  \label{lem:intent-cov}
  Any fixed retrieval rule achieves expected recall at least $\frac{k}{n} + \varepsilon$ for at most $\frac{k}{\varepsilon^2}$ intents.
\end{restatable}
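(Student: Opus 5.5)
We plan to apply \Cref{lemma:measure-gap} with a single statistical query encoding the recall of the fixed retrieval rule $R$. For a pair $(q,y)$ with $y\in\FF_2^d$ a candidate center, let $B_\rho(y)$ be the Hamming ball of radius $\rho$ around $y$, so $|B_\rho(y)|=k$. Define
\[
\varphi(q,y) := \frac{|R(q)\cap B_\rho(y)|}{k} \in [0,1].
\]
Under $D_A$ the center is $y=Aq$ and $G_A(q)=B_\rho(Aq)$. Hence $\E_{D_A}[\varphi]$ is exactly the expected recall of $R$ under intent $A$.

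Next we compute the null statistics. Under $\mathcal{N}_0$, $y$ is uniform on $\FF_2^d$ and independent of $q$. Each fixed document $w$ lies in $B_\rho(y)$ with probability $k/n$. So for every $q$ we get $\E_y[|R(q)\cap B_\rho(y)|]=k\cdot k/n$, and therefore $\E_{\mathcal{N}_0}[\varphi]=k/n$. It follows that recall at least $k/n+\varepsilon$ under intent $A$ means exactly $\Delta_A(\varphi)\ge\varepsilon$.

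The only real work is bounding the null variance. Since $\varphi\in[0,1]$, we have
\[
\Var_{\mathcal{N}_0}(\varphi)\le \E_{\mathcal{N}_0}[\varphi^2]\le \E_{\mathcal{N}_0}[\varphi]=k/n .
\]
We then apply \Cref{lemma:measure-gap} with threshold $\varepsilon$ in place of $\tau$. Its proof is a second-moment or counting statement about the gaps $\Delta_A$ and does not use the oracle, so any positive threshold is allowed. To handle the strict inequality, we apply it with threshold $\varepsilon'<\varepsilon$ and let $\varepsilon'\to\varepsilon$. This bounds the number of intents with $\Delta_A(\varphi)\ge\varepsilon$ by
\[
\frac{(n-1)\,(k/n)}{\varepsilon^2}\le \frac{k}{\varepsilon^2},
\]
which is the claimed bound.

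The main subtlety is making sure the recall really is a valid statistical query of the form $\varphi(q,y)$ used in \Cref{lemma:measure-gap}. That lemma is stated for functions of $(q,Aq)$, and the gold set is determined by its center, so this holds. We must also check that $R$ is a fixed, deterministic map from queries to $k$-sets. If $R$ is randomized, we replace $\varphi$ by its expectation over $R$'s randomness; this is still $[0,1]$-valued and the same bounds hold.
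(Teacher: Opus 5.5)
Your proposal is correct and follows the paper's proof essentially step for step: the same recall query $\varphi_R(q,y)=\rec(R(q),G(y))$, the same null mean $k/n$, the same variance bound from $\varphi_R^2\le\varphi_R$ (you drop the $-(k/n)^2$ term, which is harmless), and the same application of \Cref{lemma:measure-gap} at threshold $\varepsilon$. Your limiting argument for the strict-versus-non-strict inequality is a small piece of care the paper omits; the non-strict count also follows directly from the Bessel bound $\sum_A \Delta_A(\varphi)^2 \le (n-1)\Var_{\mathcal{N}_0}(\varphi)$ in that lemma's proof.
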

\begin{proof}[Proof sketch]
The statistical query $\varphi_R(q,y) \;=\; \rec\bigl(R(q),\, G(y)\bigr) \in [0,1]$ measures recall against a gold set $G(y)$ centered at $y$. Under $D_A$, $y = Aq$ and $\E_{D_A}[\varphi_R]
= \E_{q \sim \mathcal{Q}}\bigl[ \rec(R(q), G_A(q)) \bigr].$ Under $\mathcal{N}_0$, the center $y$ is
uniform and independent of $q$, so $\E_{\mathcal{N}_0}[\varphi_R] = k/n$. Thus, the measurement gap is $\Delta_A(\varphi_R) = \E_{D_A}[\rec(R(q), G_A(q))] - \frac{k}{n}$. Moreover, $\varphi_R^2 \le \varphi_R$ since $\varphi_R \in [0,1]$, and so
\[
  \Var_{\mathcal{N}_0}(\varphi_R)
  \;\le\; \E_{\mathcal{N}_0}[\varphi_R] - \Bigl(\frac{k}{n}\Bigr)^{\!2}
  \;=\; \frac{k}{n}\Bigl(1 - \frac{k}{n}\Bigr) \leq \frac{k}{n}.
\]
An application of \Cref{lemma:measure-gap} with $\tau = \varepsilon$ yields the result.
\end{proof}

With these results, we now give the proof of \Cref{lem:sqbarrier}.
\begin{proof}[Proof of \Cref{lem:sqbarrier}]  For each $A \in \mathcal{A}$, consider the oracle \[O_A(\varphi) =
  \begin{cases}
    \EE_{\mathcal{N}_0}[\varphi] &\text{if } |\Delta_A(\varphi)| \le \tau \\
    \EE_{D_A}[\varphi] & \text{otherwise}.
  \end{cases}.\]
The SQ learner must succeed against every $\STAT_{D_A}(\tau)$ oracle, hence also against each $\{O_A\}_{A \in \mathcal{A}}$. Fix the random seed $s$ of the SQ learner and let $W_s \subseteq \mathcal{A}$ be the set of intents for which the learner succeeds, i.e., produces a retrieval rule whose recall is at least $k/n + \varepsilon$, given access to the oracle $O_A$. As a point of reference, we will consider a hypothetical sequence of queries $\varphi_1, \dots, \varphi_{T_{\operatorname{SQ}}}$ produced by the learner with access to an oracle that only answers with the null value $\EE_{\mathcal{N}_0}[\varphi]$. For any intent $A \in W_s$ in which the learner succeeds, either (i) $O_A$ produces the same answers as the reference oracle, resulting in the same retrieval rule $R$, or (ii) $O_A$'s answers differ from the reference oracle at some timestep $t \leq T_{\operatorname{SQ}}$. Thus, we can write
\[|W_s| \leq  |\{A \in \mathcal{A}: R \text{\;achieves recall\;} k/n + \varepsilon\}| + |\{A \in \mathcal{A}: |\Delta_A(\varphi_t)| > \tau \text{\;for some\;} t \leq T_{\operatorname{SQ}}\}|.\]
The first set has size at most $k/\varepsilon^2$ by \Cref{lem:intent-cov}. For the second, define $S_t := \{A \in \mathcal{A}: |\Delta_A(\varphi_t)| > \tau\}$. By \Cref{lemma:measure-gap} and the fact that a random variable taking values in $[-1,1]$ has variance at most 1, $|S_t| < (n-1)/\tau^2$. Summing over the $T_{\operatorname{SQ}}$ queries made by the SQ learner yields that the second set has size at most $\tau^{-2}T_{\operatorname{SQ}}\,(n-1)$. 

Finally, averaging over the randomness of the learner, which has success probability at least $\delta$ for every $A\in \mathcal{A}$ and every $\operatorname{STAT}_{D_A}(\tau)$ oracle, yields the desired result up to rearranging terms:
$\delta\,|\mathcal{A}| \leq \E[|W_s|]
  \;\le\; \frac{T_{\operatorname{SQ}}\,(n-1)}{\tau^2} + \frac{k}{\varepsilon^2}$. 
\end{proof}

\section{Conclusion}
Complementing recent work on the geometric limitations of embedding-based retrieval, we explore whether learned query encoders can exploit the geometric capacity of a frozen document index. Our main takeaway is that producing a high-quality query encoder can pose a substantial barrier even when the geometry of the underlying document index is not restrictive. We exhibit this empirically on real-world encoder models and benchmarks through a ranking-SVM method to gauge geometric capacity, and theoretically through a computationally hard-to-learn family of retrieval tasks in the statistical query model. See \Cref{app:limitations} for a discussion of our study's limitations.  
\addcontentsline{toc}{section}{Conclusion}

\newpage

\subsection*{AI use statement}
During the work on this research we used Anthropic's Opus 5 and OpenAI's GPT 5.6 Sol to aid in interactive coding, literature search, proof verification and simplification, and writing adjustments and polishing. The authors are accountable for the manuscript, proofs and code in their entirety.

\bibliographystyle{plainnat}
\bibliography{refs}

\appendix

\section{Limitations}\label{app:limitations}
\noindent\textbf{Empirical results.}
The empirical gap we demonstrate on real-world encoders and benchmarks is, in some part, an artifact of our specific fine-tuning setup; it does not preclude the existence of better query encoders which might have not been tested or not yet discovered. 
Furthermore, although we include multi-vector models as a point of reference, our ranking SVM bounds on geometric capacity do not immediately extend to the multi-vector setting, and we leave it to future work to directly optimize multi-vector embeddings. \\

\noindent\textbf{Theoretical results.}
Our computational hardness result is proven in the SQ model and thus limited to learning algorithms that adhere to it, and our hard retrieval task construction is necessarily contrived for the sake of the proof; it may not resemble retrieval tasks that occur in real applications. Overall, while our results are highly suggestive that query encoding may be a substantial barrier to better recall in dense retrieval, future work may develop better query encoding methods and evolve our understanding of this field. 

\section{Omitted Proofs}
\subsection{Perfect realizability} \label{app:realize}
This section shows that the embedding model $q \mapsto Aq$ can be represented by a simple ReLU network with one hidden layer. We first represent parity by a one-dimensional ReLU spline.
\begin{lemma}\label{lem:parity}
  Let $\sigma(t) = \max\{t, 0\}$.
For all integers $t \in \{0,\dots,P\}$, the following ReLU spline satisfies $\ p_P(t) = (-1)^t$:
\[p_P(t) := 1 - 2\sigma(t) + 4\sum_{i=1}^{P-1} (-1)^{i+1}\,\sigma(t-i).\]

\end{lemma}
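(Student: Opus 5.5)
We prove the identity $p_P(t) = (-1)^t$ for integers $0\le t\le P$ by induction on $t$. The natural quantity to track is the increment $p_P(t+1)-p_P(t)$, because each ReLU $\sigma(t-i)$ contributes only through its change in slope at the knot $i$.

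\emph{Base case.} At $t=0$ every term $\sigma(t-i)$ with $i\ge 0$ vanishes, so $p_P(0)=1=(-1)^0$.

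\emph{Increment.} For integer $t\ge 0$ and integer $i$ we have $\sigma(t+1-i)-\sigma(t-i)=\mathbf{1}\{i\le t\}$. This holds because $\sigma$ is linear with slope $1$ on $[i,\infty)$ and identically zero on $(-\infty,i]$, and the integer interval $[t,t+1]$ lies entirely on one side of the knot $i$. Summing these increments gives
\[
p_P(t+1)-p_P(t) \;=\; -2 + 4\sum_{i=1}^{\min(t,P-1)} (-1)^{i+1}.
\]
The alternating sum $\sum_{i=1}^{m}(-1)^{i+1}$ equals $1$ if $m$ is odd and $0$ if $m$ is even. For $0\le t\le P-1$ we have $\min(t,P-1)=t$, so the increment is
\[
-2+4\cdot\mathbf{1}\{t \text{ odd}\} \;=\; 2(-1)^{t+1}.
\]

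\emph{Inductive step.} Suppose $p_P(t)=(-1)^t$ for some $t\le P-1$. Then
\[
p_P(t+1) \;=\; (-1)^t + 2(-1)^{t+1} \;=\; (-1)^{t+1},
\]
which closes the induction and covers every $t\in\{0,\dots,P\}$.

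There is no real obstacle; the only points needing care are bookkeeping. One is the upper limit of the sum: the last knot is at $P-1$, and this is exactly what is needed for the step from $t=P-1$ to $t=P$. The other is the sign convention $(-1)^{i+1}$, which makes the slope alternate between $-2$ and $+2$. This construction is what will feed into the realizability of $q\mapsto h(Aq)$ in \Cref{lemma:realize}. The coordinate $h(Aq)_i=(-1)^{(Aq)_i}$ equals $(-1)^{\langle A_i,q\rangle_{\R}}$, where the real inner product $\langle A_i,q\rangle_{\R}$ is an integer in $\{0,\dots,P\}$. Applying $p_P$ to this inner product therefore yields a one-hidden-layer ReLU network with $dP$ hidden units.
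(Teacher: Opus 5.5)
Your proof is correct and is essentially the paper's argument: both start from $p_P(0)=1$ and show that across each unit interval $[j,j+1]$ the spline changes by $-2+4\sum_{i=1}^{j}(-1)^{i+1}=2(-1)^{j+1}$, so consecutive integer values alternate in sign. The only difference is presentational: you use discrete increments $\sigma(t+1-i)-\sigma(t-i)=\mathbf{1}\{i\le t\}$ with an explicit induction and track the $\min(t,P-1)$ cutoff, whereas the paper reads off the slope of each linear piece.
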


\begin{proof}
We have $p_P(0) = 1$. Note that over the interval $t \in (j, j+1)$, $p_P(t)$ is a linear function with slope
\[\frac{d}{dt} \; p_P(t) = -2 + 4\sum_{i=1}^j (-1)^{i+1} = \begin{cases}
2 &\text{if $j$ odd} \\
-2 &\text{if $j$ even}
\end{cases}.\]
Thus, the value of $p_P(t)$ alternates between $1$ and $-1$ at consecutive integers.
\end{proof}

For the binary-cube index, each coordinate of $h(Aq)$ is one parity of
keywords in $q$, so Lemma~\ref{lem:parity} gives width $dP$ directly. 
\subsection{Recovery from individual samples}
This appendix proves the upper bound accompanying Theorem~\ref{thm:main}.

\begin{lemma}
  Given access to individual training samples $(q, G_A(q))$ from $D_A$, it is possible to recover the hidden intent $A$ in $O(P^3)$ time with high probability.
\end{lemma}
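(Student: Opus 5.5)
Each sample $(q,G_A(q))$ yields $Aq$ exactly, and $A$ is linear over $\FF_2$. So I would recover $A$ by Gaussian elimination once I have collected $P$ linearly independent queries.

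\emph{Step 1 (decode the center).} Since $\rho<d$, the gold set $G_A(q)$ is the Hamming ball of radius $\rho$ around $Aq$. I can read off the center from this ball. For each coordinate $i$, take the majority bit over the codes in the ball. Among words at distance $j$ from the center, a fraction $j/d$ differ at coordinate $i$. Since $\sum_{j\le\rho}\binom{d}{j}\frac{j}{d}=\sum_{j\le \rho-1}\binom{d-1}{j}<\frac12\sum_{j\le\rho}\binom{d}{j}$ whenever $\rho<d$, strictly fewer than half the ball's codes disagree with the center at $i$, so majority recovers each bit.

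An even simpler route is to identify the center directly. It is the unique ball element whose entire radius-$\rho$ neighborhood lies in the ball. This costs $O(k\cdot\mathrm{poly}(d))$ time, which is polylogarithmic in $P$.

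\emph{Step 2 (collect a basis).} Draw samples $q^{(1)},q^{(2)},\dots$ and keep a sample whenever it is linearly independent of those already kept. I maintain the kept queries in row-echelon form, so each test costs $O(P^2)$.

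A uniform $q$ lies in the span of $\ell<P$ kept vectors with probability $2^{\ell-P}$. The number of draws needed to reach $P$ independent vectors is therefore a sum of geometric variables with expectation $\sum_{\ell<P}(1-2^{\ell-P})^{-1}\le P+2$. By Markov's inequality, or by repeating the procedure, $O(P)$ draws suffice with high probability.

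\emph{Step 3 (solve).} Form $Q\in\FF_2^{P\times P}$ with the kept $q^{(j)}$ as columns, and $Y=AQ$ whose columns are the decoded centers. Then $A=YQ^{-1}$. Inverting $Q$ by Gaussian elimination costs $O(P^3)$, and the product costs $O(dP^2)$. The total is $O(P^3)$ plus the negligible decoding cost.

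I expect the main obstacle to be bookkeeping rather than any single deep step. The points to check are that the incremental independence tests over $O(P)$ samples stay within $O(P^3)$ total time, and that center decoding is well defined for every $\rho<d$, including the edge case $\rho=d-1$. In that edge case the ball is the whole cube minus the antipode, and the center is the complement of the missing point, which the majority argument still handles.
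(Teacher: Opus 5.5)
Your proof is correct and follows the paper's argument: decode $Aq$ from the gold set, then recover $A$ by Gaussian elimination over $\FF_2$ once the sampled queries span $\FF_2^P$. You also make the center decoding explicit, which the paper only asserts, but your high-probability step is loose: Markov applied to your expected $P+2$ draws gives only constant success probability with $O(P)$ draws, whereas the paper's union bound over nonzero $u \in \FF_2^P$, namely $\Pr[q_1,\dots,q_T \text{ fail to span}] \le 2^{P-T}$, shows directly that $T = 2P$ draws succeed with probability $1-2^{-P}$.
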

\begin{proof}
  First note that by construction, the gold set $G_A(q)$ uniquely identifies its center $Aq$. Thus, our task amounts to reconstructing a binary matrix $A$ from samples of matrix-vector products $(q_t, Aq_t)_{t=1}^T$. In particular, row $j$ of $A$ is determined by the linear equations
  \[A_j q_t = (Aq_t)_j \quad \quad \text{for all $t=1, \dots, T$}.\]
  This system has a unique solution once $\operatorname{span}(q_1, \dots, q_T) = \FF_2^P$. For any nonzero $u \in \FF_2^P$ we have $\Pr[\langle q_t, u\rangle = 0 \text{ for all } t=1, \dots, T] = 2^{-T}$ since $\langle q_t, u\rangle$ is a uniform random bit. Therefore, a union bound over the $2^P - 1$ nonzero $u$ shows that $\{q_1, \dots, q_T\}$ fails to span $\FF^P_2$ with probability at most $2^{P-T}$. Taking $T = 2P$, Gaussian elimination
recovers all $d$ rows of $A$ in time $O(P^3)$ with probability at least
$1 - 2^{-P}$.
\end{proof}

Combined with Lemma~\ref{lemma:realize}, recovering $A$ yields an encoder of recall one, so intents are efficiently learnable from individual samples and the computational barrier in Theorem~\ref{thm:main} is specific to statistical access.

\subsection{SQ Hardness}\label{app:sq}

\lemFamily*

\begin{proof}
Fix $i \neq j$. The xor-sum $A^{(i)} \oplus A^{(j)}$ of two independent uniform matrices in $\FF_2^{d \times P}$ is itself uniform. To show that a uniform binary matrix is often full rank, consider building such a matrix row-by-row, $v_1,v_2,\ldots$. Because the span of $\ell-1$ binary vectors contains at most $2^{\ell-1}$ binary vectors, we have that
\[\Pr[v_\ell \in \operatorname{span}\{v_1, \dots, v_{\ell-1}\}] \leq \frac{2^{\ell-1}}{2^P} = 2^{\ell - 1 - P}.\]
Taking a union bound over $d$ rows yields that
\[
  \PR\bigl[\operatorname{rank}(A^{(i)} \oplus A^{(j)}) < d\bigr]
  \;\le\; \sum_{\ell=1}^{d} 2^{\ell-1-P} \;<\; n \cdot 2^{-P}
\]
for each pairwise difference matrix. A final union bound over the $\binom{N}{2}$ difference matrices gives the desired result:
\begin{align*}
  \Pr[\exists i \neq j: \operatorname{rank}(A^{(i)} \oplus A^{(j)}) < d] \leq \frac{N^2}{2} \cdot n\cdot 2^{-P} \leq \frac{1}{2n} \leq 1/2.
\end{align*}
\end{proof}

We use \Cref{lemma:measure-gap} to show that a fixed retrieval rule can only beat baseline recall on a small fraction of intents.
In the proof below we give the learner the stronger training observation
$(q,y)$ with $y=Aq$. 
This is stronger than gold-set supervision, $(q,G(q))$, 
since any statistical query $\psi(q,G)$ can be simulated by
$\widetilde{\psi}(q,y)=\psi(q,B(y;r))$, with the same expectation and
tolerance. Thus, a lower bound given $(q,y)$ also
holds under gold-set supervision.

\measureGap*
\begin{proof}
  Write the likelihood ratio $L_A(q,y) = \frac{D_A(q,y)}{\mathcal{N}_0(q,y)} = n \cdot \textbf{1}\{y = Aq\}$ and let $g_A = L_A - 1$. Then we have that $\E_{D_A}[\varphi] = \E_{\mathcal{N}_0}[L_A \varphi]$, so
  \[\Delta_A(\varphi) = \E[(L_A - 1) \varphi] = \E[g_A \varphi].\]
  Moreover, because $\E_{\mathcal{N}_0}[g_A] = 0$, we can additionally write $\E_{\mathcal{N}_0}[g_A \varphi] = \E_{\mathcal{N}_0}[g_A (\varphi - \E_{\mathcal{N}_0}[\varphi])]$. The remainder of the proof is straightforward: we first show that $\{g_A/\sqrt{n-1}\}_{A \in \mathcal{A}}$ is a set of orthonormal random variables, then apply Bessel's inequality to directly obtain the desired result.

  Toward that end, note that for any intents $A,A' \in \mathcal{A}$,
\[
  \EE_{\mathcal{N}_0}[L_A L_{A'}]
  = n^2 \cdot \PR_{\mathcal{N}_0}\bigl[ y = Aq,  y = A'q \bigr]
  = n \cdot \PR_{q \sim \mathcal{Q}}\bigl[ Aq = A'q \bigr].
\]
The first inequality follows by definition and the second from the fact that $y$ is uniform and independent of $q$ under $\mathcal{N}_0$. By condition \eqref{eq:chance-collision}, the right-hand-side is equal to $n$ when $A=A'$ and $1$ otherwise. Thus,
\[
  \E_{\mathcal{N}_0}[g_A g_{A'}]
  = \EE_{\mathcal{N}_0}[L_A L_{A'}] - 1
  = (n-1)\,\mathbf{1}\{A = A'\} ,
\]
and $\{ g_A / \sqrt{n-1} \}_{A \in \mathcal{A}}$ is orthonormal. Finally, applying Bessel's inequality yields
\begin{equation*}
  \sum_{A \in \mathcal{A}} \Delta_A(\varphi)^2
  \;=\sum_{A \in \mathcal{A}}\E_{\mathcal{N}_0}[g_A(\varphi - \E_{\mathcal{N}_0}[\varphi])]^2\; \leq \;(n-1)\,\Var_{\mathcal{N}_0}(\varphi).
\end{equation*}
Each intent $A$ with $|\Delta_A(\varphi)| > \tau$ contributes at least $\tau^2$ to the sum on the left-hand side, and therefore there can be at most $\frac{(n-1) \cdot \operatorname{Var}_{\mathcal{N}_0}(\varphi)}{\tau^2}$ such intents.
\end{proof}

\intentCov*
\begin{proof}
Let $G(y)$ denote the gold set around center $y$ and consider the query $\varphi_R(q,y) \;=\; \rec\bigl(R(q),\, G(y)\bigr) \in [0,1]$. Under $D_A$, the center is $y = Aq$ and $\E_{D_A}[\varphi_R]
= \E_{q \sim \mathcal{Q}}\bigl[ \rec(R(q), G_A(q)) \bigr].$ Under $\mathcal{N}_0$, the gold center $y$ is
uniform and independent of $q$, so each of the $k$ documents in $R(q)$ lie in $G(y)$ with probability $k/n$. Thus,
\[\E_{\mathcal{N}_0}[\varphi_R] = \frac{1}{|G(y)|} \cdot \sum_{d_i \in R(q)} \Pr[d_i \in G(y)] = \frac{1}{k} \cdot \frac{k^2}{n} = k/n.\]
We've shown that $\Delta_A(\varphi_R) = \E_{D_A}[\rec(R(q), G_A(q))] - \frac{k}{n}$. Moreover, $\varphi_R^2 \le \varphi_R$ since $\varphi_R \in [0,1]$, and so
\[
  \Var_{\mathcal{N}_0}(\varphi_R)
  \;\le\; \E_{\mathcal{N}_0}[\varphi_R] - \Bigl(\frac{k}{n}\Bigr)^{\!2}
  \;=\; \frac{k}{n}\Bigl(1 - \frac{k}{n}\Bigr).
\]
An application of \Cref{lemma:measure-gap} $\tau = \varepsilon$ yields the result.
\end{proof}

\section{Additional Experimental Details}
\subsection{Setup} \label{app:exp-setup}
\noindent\textbf{Benchmarks.} All datasets except \textsf{fiqa}, \textsf{nfcorpus} and \textsf{scifact} are evaluation-only and subject to a 70/30 split of the query set for training.\\

\begin{tabular}{llrrrrr}
\toprule
 &  &  & \multicolumn{3}{c}{Queries} &  \\
\cmidrule(lr){4-6}
& Category & Documents & total & train & eval & mean $|G(q)|$ \\
\midrule
\multirow{1}{*}{\textbf{LIMIT}} & limit & 50,000 & 1000 & 700 & 300 & 2.0 \\
\midrule
\multirow{12}{*}{\textbf{BRIGHT}} & biology & 57,359 & 103 & 72 & 31 & 3.6 \\
 & earth science & 121,249 & 116 & 81 & 35 & 5.0 \\
 & economics & 50,220 & 103 & 72 & 31 & 7.8 \\
 & psychology & 52,835 & 101 & 70 & 31 & 6.9 \\
 & robotics & 61,961 & 101 & 70 & 31 & 5.1 \\
 & stackoverflow & 107,081 & 117 & 81 & 36 & 4.1 \\
 & sustainable living & 60,792 & 108 & 75 & 33 & 5.3 \\
 & leetcode & 413,932 & 142 & 99 & 43 & 1.8 \\
 & pony & 7,894 & 112 & 78 & 34 & 19.8 \\
 & aops & 188,002 & 111 & 77 & 34 & 4.7 \\
 & theoremqa theorems & 23,839 & 76 & 53 & 23 & 2.0 \\
 & theoremqa questions & 188,002 & 194 & 135 & 59 & 2.3 \\
\midrule
\multirow{8}{*}{\textbf{BEIR}} 
 & fiqa & 57,638 & 6148 & 5500 & 648 & 2.6 \\
 & nfcorpus & 3,633 & 2913 & 2590 & 323 & 42.2 \\
 & quora & 522,931 & 10000 & 7000 & 3000 & 1.6 \\
 & scidocs & 25,657 & 1000 & 700 & 300 & 4.9 \\
 & scifact & 5,183 & 1109 & 809 & 300 & 1.1 \\
 & trec-covid & 171,332 & 50 & 35 & 15 & 493.5 \\
 & webis-touche2020 & 382,545 & 49 & 34 & 15 & 19.0 \\
\bottomrule
\end{tabular}

\\\\
\noindent\textbf{Hyperparameters.}
Embedding model fine tuning is done using LoRA($r=16$, $\alpha=32$, dropout $0.05$) with the AdamW optimizer under a cosine schedule with $100$ warmup steps, weight decay $0.01$, and default learning rate $10^{-4}$, for $3000$ steps. We use a multi-positive InfoNCE loss with temperature $0.05$. \Cref{tab:hparams} gives full encoder configurations, including LoRA targets. BM25 is a pure Python Okapi implementation with $k_1=0.9, b=0.4$.\\

\begin{table}[tb]
\centering
\small
\setlength{\tabcolsep}{6pt}
\renewcommand{\arraystretch}{1.2}
\begin{tabular}{@{}l l l l@{}}
\toprule
& \textbf{DistilBERT} & \textbf{Qwen3-Embedding-4B} & \textbf{GTE-ModernColBERT} \\
\midrule
Checkpoint
  & \texttt{sentence-transformers/}
  & \texttt{Qwen/}
  & \texttt{lightonai/} \\
  & \texttt{msmarco-distilbert-}
  & \texttt{Qwen3-Embedding-4B}
  & \texttt{GTE-ModernColBERT-} \\
  & \texttt{base-v3}
  &
  & \texttt{v1} \\
Representation
  & single-vector & single-vector & multi-vector (MaxSim) \\
Pooling
  & mean          & last token    & none (per token) \\
Truncation
  & 350 tok.      & 512 tok.      & 300 tok.\ (queries 32)  \\
LoRA targets
  & \texttt{q\_lin}, \texttt{k\_lin},
  & \texttt{q\_proj}, \texttt{k\_proj},
  & \texttt{Wqkv}, \texttt{Wo} \\
  & \texttt{v\_lin}, \texttt{out\_lin}
  & \texttt{v\_proj}, \texttt{o\_proj}
  & \\
\midrule
Training batch size
  & 32  & 32  & 8 \\
Encoding batch size
  & 256 & 256 & 64 \\
\bottomrule
\end{tabular}
\caption{Encoder configurations. The default Qwen3 query prompt is
\texttt{"Instruct: Given a search query, retrieve relevant passages that
answer the query\textbackslash nQuery: "}.}
\label{tab:hparams}
\end{table}

\noindent\textbf{Compute environment.}
All experiments (document indexing, query encoder fine-tuning, geometric ceiling lower bound computation) were run on an NVIDIA HGX H200 node with 8 GPUs (141GB memory each), 2× Intel Xeon Platinum 8470 CPUs (104 cores total), and 1T of RAM.
\subsection{Full results} \label{app:results}
\Cref{tab:capacity} reports the ranking-SVM geometric ceiling lower bound for frozen document indexes embedded via \textsf{DistilBERT} and \textsf{Qwen3-Embedding-4B}, respectively. \Cref{tab:pretrained} reports out-of-the-box recall for each of the all four retrieval methods we consider. \Cref{tab:finetuned} reports recall of fine-tuned single-vector and multi-vector retrieval models, with performance measured on both training and evaluation queries.
\begin{table}[h]\centering\small{\footnotesize\setlength{\tabcolsep}{3pt}
\begin{tabular}{ll rr c rr c}
\toprule
 &  & \multicolumn{2}{c}{\makebox[0pt][c]{DistilBERT}} & \rule{0pt}{0pt}\hspace{20pt} & \multicolumn{2}{c}{\makebox[0pt][c]{Qwen3-Embedding-4B}} & \rule{0pt}{0pt}\hspace{12pt} \\
\cmidrule(lr){3-4} \cmidrule(lr){6-7}
& Category & train & eval & & train & eval & \\
\midrule
\multirow{1}{*}{\textbf{LIMIT}} & limit & 1.000 & 1.000 & & 1.000 & 1.000 &  \\
\midrule
\multirow{12}{*}{\textbf{BRIGHT}} & biology & 0.979 & 0.962 & & 0.980 & 0.954 &  \\
 & earth science & 0.980 & 0.997 & & 0.985 & 0.999 &  \\
 & economics & 0.983 & 1.000 & & 0.984 & 0.999 &  \\
 & psychology & 0.999 & 0.996 & & 0.995 & 0.990 &  \\
 & robotics & 0.990 & 0.968 & & 0.990 & 0.968 &  \\
 & stackoverflow & 0.847 & 0.833 & & 0.902 & 0.910 &  \\
 & sustainable living & 0.990 & 0.957 & & 0.998 & 0.956 &  \\
 & leetcode & 0.987 & 1.000 & & 0.993 & 1.000 &  \\
 & pony & 1.000 & 1.000 & & 1.000 & 1.000 &  \\
 & aops & 1.000 & 1.000 & & 1.000 & 1.000 &  \\
 & theoremqa theorems & 1.000 & 1.000 & & 1.000 & 1.000 &  \\
 & theoremqa questions & 0.994 & 0.996 & & 0.998 & 1.000 &  \\
\midrule
\multirow{8}{*}{\textbf{BEIR}} 
 & fiqa & 0.995 & 0.996 & & 0.994 & 0.995 &  \\
 & nfcorpus & 0.975 & 0.975 & & 0.982 & 0.986 &  \\
 & quora & 0.989 & 0.989 & & 0.984 & 0.984 &  \\
 & scidocs & 0.983 & 0.980 & & 0.980 & 0.976 &  \\
 & scifact & 1.000 & 1.000 & & 1.000 & 1.000 &  \\
 & trec-covid & 0.539 & 0.493 & & 0.767 & 0.737 &  \\
 & webis-touche2020 & 0.858 & 0.860 & & 0.907 & 0.938 &  \\
\bottomrule
\end{tabular}}

\caption{Geometric capacity by category, as measured by the ranking SVM.}
\label{tab:capacity}\end{table}

\begin{table}[h]\centering\small{\footnotesize\setlength{\tabcolsep}{3pt}
\begin{tabular}{ll cccc}
\toprule
& Category & BM25 & DistilBERT & Qwen3-Embedding-4B & GTE-ModernColBERT \\
\midrule
\multirow{1}{*}{\textbf{LIMIT}} & limit & 0.917 & 0.002 & 0.015 & 0.525 \\
\midrule
\multirow{12}{*}{\textbf{BRIGHT}} & biology & 0.156 & 0.016 & 0.073 & 0.078 \\
 & earth science & 0.201 & 0.078 & 0.233 & 0.168 \\
 & economics & 0.186 & 0.074 & 0.179 & 0.143 \\
 & psychology & 0.087 & 0.073 & 0.210 & 0.102 \\
 & robotics & 0.046 & 0.045 & 0.038 & 0.104 \\
 & stackoverflow & 0.095 & 0.004 & 0.034 & 0.036 \\
 & sustainable living & 0.104 & 0.048 & 0.061 & 0.051 \\
 & leetcode & 0.048 & 0.048 & 0.062 & 0.095 \\
 & pony & 0.109 & 0.080 & 0.004 & 0.108 \\
 & aops & 0.038 & 0.014 & 0.034 & 0.008 \\
 & theoremqa theorems & 0.000 & 0.000 & 0.189 & 0.000 \\
 & theoremqa questions & 0.034 & 0.008 & 0.056 & 0.008 \\
\midrule
\multirow{8}{*}{\textbf{BEIR}} 
 & fiqa & 0.149 & 0.182 & 0.455 & 0.147 \\
 & nfcorpus & 0.172 & 0.143 & 0.242 & 0.052 \\
 & quora & 0.607 & 0.753 & 0.772 & 0.210 \\
 & scidocs & 0.106 & 0.100 & 0.194 & 0.089 \\
 & scifact & 0.541 & 0.405 & 0.632 & 0.552 \\
 & trec-covid & 0.236 & 0.153 & 0.480 & 0.094 \\
 & webis-touche2020 & 0.349 & 0.135 & 0.233 & 0.045 \\
\bottomrule
\end{tabular}}

\caption{Expected recall@$|G(q)|$ of pre-trained models, without fine-tuning.}
\label{tab:pretrained}\end{table}

\begin{table}[h]\centering\small{\footnotesize\setlength{\tabcolsep}{3pt}
\begin{tabular}{ll rr c rr c rr c}
\toprule
 &  & \multicolumn{2}{c}{\makebox[0pt][c]{DistilBERT}} & \rule{0pt}{0pt}\hspace{39pt} & \multicolumn{2}{c}{\makebox[0pt][c]{Qwen3-Embedding-4B}} & \rule{0pt}{0pt}\hspace{39pt} & \multicolumn{2}{c}{\makebox[0pt][c]{GTE-ModernColBERT}} & \rule{0pt}{0pt}\hspace{12pt} \\
\cmidrule(lr){3-4} \cmidrule(lr){6-7} \cmidrule(lr){9-10}
& Category & train & eval & & train & eval & & train & eval & \\
\midrule
\multirow{1}{*}{\textbf{LIMIT}} & limit & 0.439 & 0.045 & & 1.000 & 0.068 & & 1.000 & 0.998 &  \\
\midrule
\multirow{12}{*}{\textbf{BRIGHT}} & biology & 0.972 & 0.105 & & 0.972 & 0.189 & & 0.975 & 0.240 &  \\
 & earth science & 0.973 & 0.178 & & 0.982 & 0.274 & & 0.910 & 0.243 &  \\
 & economics & 0.984 & 0.103 & & 0.995 & 0.276 & & 0.971 & 0.197 &  \\
 & psychology & 0.992 & 0.021 & & 0.996 & 0.151 & & 0.970 & 0.119 &  \\
 & robotics & 0.972 & 0.037 & & 0.995 & 0.131 & & 0.964 & 0.113 &  \\
 & stackoverflow & 0.836 & 0.031 & & 0.900 & 0.103 & & 0.712 & 0.044 &  \\
 & sustainable living & 0.987 & 0.069 & & 0.995 & 0.088 & & 0.939 & 0.102 &  \\
 & leetcode & 0.991 & 0.070 & & 1.000 & 0.116 & & 0.745 & 0.078 &  \\
 & pony & 0.930 & 0.740 & & 0.955 & 0.790 & & 0.998 & 0.801 &  \\
 & aops & 0.878 & 0.148 & & 0.994 & 0.310 & & 0.758 & 0.062 &  \\
 & theoremqa theorems & 1.000 & 0.159 & & 1.000 & 0.304 & & 1.000 & 0.040 &  \\
 & theoremqa questions & 0.968 & 0.112 & & 0.998 & 0.213 & & 0.916 & 0.075 &  \\
\midrule
\multirow{8}{*}{\textbf{BEIR}} 
 & fiqa & 0.222 & 0.180 & & 0.824 & 0.363 & & 0.303 & 0.285 &  \\
 & nfcorpus & 0.211 & 0.154 & & 0.686 & 0.226 & & 0.276 & 0.218 &  \\
 & quora & 0.805 & 0.754 & & 0.942 & 0.770 & & 0.704 & 0.694 &  \\
 & scidocs & 0.325 & 0.080 & & 0.934 & 0.192 & & 0.198 & 0.098 &  \\
 & scifact & 0.990 & 0.661 & & 1.000 & 0.766 & & 0.975 & 0.760 &  \\
 & trec-covid & 0.489 & 0.231 & & 0.583 & 0.470 & & 0.365 & 0.284 &  \\
 & webis-touche2020 & 0.637 & 0.142 & & 0.747 & 0.203 & & 0.167 & 0.142 &  \\
\bottomrule
\end{tabular}}

\caption{Expected recall@$|G(q)|$ after query-side fine-tuning.}
\label{tab:finetuned}\end{table}

\end{document}